% $Id: preview.tex,v 1.19 1998/06/22 08:07:00 ohl Exp $
%%%%%%%%%%%%%%%%%%%%%%%%%%%%%%%%%%%%%%%%%%%%%%%%%%%%%%%%%%%%%%%%%%%%%%%%

\NeedsTeXFormat{LaTeX2e}
\documentclass[10pt]{article}
\usepackage{amsmath,amssymb}
\usepackage{amscd}
\usepackage{epsfig}
\usepackage{amsthm}
\allowdisplaybreaks
\setlength{\unitlength}{1mm}
\newtheorem{thm}{Theorem}[section]

\newtheorem{defin}[thm]{Definition}
\newtheorem{prop}{Proposition}[section]
\oddsidemargin 0.0in
\textwidth 6.5in

%%%%%%%%%%%%%%%%%%%%%%%%%%%%%%%%%%%%%%%%%%%%%%%%%%%%%%%%%%%%%%%%%%%%%%%%
\makeindex
\begin{document}
\title{{\bf Minimal Economic Distributed Computing}}
\author{S. Youssef$^1$, J. Brunelle$^2$, J. Huth$^3$, D.C. Parkes$^4$, M. Seltzer$^4$ and J. Shank$^1$ \\
  ${}^1$Center for Computational Science, Boston University \\
  ${}^2$Faculty of Arts and Sciences IT, Harvard University \\
  ${}^3$Faculty of Arts and Sciences, Harvard University \\
  ${}^4$School of Engineering and Applied Sciences, Harvard University \\
}
\maketitle
\begin{abstract}
In an ideal distributed computing infrastructure, users would be able to use
diverse distributed computing resources in a simple coherent way, with
guaranteed security and efficient use of shared resources in accordance
with the wishes of the owners of the resources. Our strategy 
for approaching this ideal is to first find the simplest
structure within which these goals can plausibly be achieved.
This structure, we find, is given by a particular recursive distributive lattice freely constructed
from a presumed partially ordered set of all data in the infrastructure. 
Minor syntactic adjustments to the resulting algebra yields a simple language 
resembling a UNIX shell, a concept of execution and an interprocess protocol.  
Persons, organizations and servers within the system express their interests 
explicitly via a hierarchical
currency.  The currency provides a common 
framework for treating authentication, access control and
resource sharing as economic problems while also introducing a 
new dimension for improving the infrastructure over time by designing
system components which compete with each other to earn the currency.
We explain these results, discuss experience with an implementation 
called {\it egg} and point out areas where more research is needed.
\end{abstract}
%%%%%%%%%%%%%%%%%%%%%%%%%%%%%%%%%%%%%%%%%%%%%%%%%%%%%%%%%%%%%%%%%%%%%%%%

\section{Introduction}

    An ideal distributed computing infrastructure would present diverse distributed
computing resources in a conceptually coherent framework while 
guaranteeing flexible, secure, efficient shared use of the resources, 
all in accordance with the wishes of the owners of the resources.
Although these goals have an obvious attraction
and importance, the all-encompassing nature of the problem makes it 
difficult to approach as a whole.  Our strategy here is to attempt to identify 
minimal required properties of the infrastructure until a simplest
solution is essentially uniquely determined.  We then hope that 
this solution is also sufficient for distributed computing in general.
Starting with the simplest possibility, we consider an infrastructure
made from objects of a single type called ``caches.''
Taking storage of caches as a minimal requirement,
we find that caches must be a particular recursive distributive
lattice freely constructed from an assumed partially ordered
set of all data which might be stored in the system.  Minor syntactic 
re-arrangement of the resulting cache algebra yields a language 
and a single user environment which resembles a UNIX shell where ``shell commands'' 
pipe cache streams rather than byte streams.  The algebra comes with a built-in concept 
of execution and a built in universal interprocess protocol which is useful on a 
large scale where the infrastructure is inevitably made of independent communicating processes.
Within the cache-based infrastructure, authentication, resource allocation and 
access control are treated as economic transactions where an explicit hierarchical
currency is exchanged.  Although an economic solution to these problems is not 
dictated as the unique simplest solution in the same sense that caches are, we 
argue below that a substantial simplification is obtained compared with 
conventional alternatives.  At the same time, the currency opens a new dimension for
system improvement over time: caches can be designed to compete with each other to 
earn the currencies so that cache ``self interest'' results in improved system performance.

   In order to test whether caches, the language and the currency really are sufficient for 
distributed computing in general, we have implemented these structures in a system called {\it egg}\cite{Egg}.  
Experience with egg shows (section 6) that our hopes are justified in the sense that this framework 
appears to be suitable for everything that one normally wants a distributed computing infrastructure to do.  The status of 
egg as the ``simplest distributed computing infrastructure'' in the sense explained below gives some indication 
that cache-based infrastructure will be more flexible, easier to comprehend by users and easier to
make secure than alternative systems.  In the sections which follow, we define the 
infrastructure, discuss experience with egg and point out areas where more work is needed.

\section{Caches and data}

     Taking the simplest possibility first, consider an infrastructure where
all functional elements are objects of a single type and let us agree to call
these objects {\it caches}.  Since any distributed computing infrastructure must at least include
storage, storage of caches in other caches must be possible.  This means that at 
least two cache-valued binary operators on caches must exist: one to {\bf store} a cache 
in another cache and a second to {\bf retrieve} a stored cache by providing a cache as 
specification. These operations are related to each other by formulas which follow 
from what one intuitively means by ``storage.''  For example, one expects
\begin{equation}
{\bf retrieve}(A,{\bf store}(A,B))=A
\end{equation}
since, if you store cache $A$ in $B$, retrieving $A$ should again produce $A$.
Similarly, we expect that storing $A$ in $B$ followed by storing $C$ in the result should
be the same as storing $C$ in $B$ followed by storing $A$ in the result and so we expect that {\bf store}
is associative.  Continuing this way, one finds that our expectations are exactly met 
provided that caches are a distributive lattice\cite{lattice} with {\bf store} and {\bf retrieve} as join and meet respectively.
At this point, we know that 
caches must form a distributive lattice, but which one?
A clue comes from considering the partially ordered set $D$ of all data
that one might want to ever be stored in the infrastructure.  Since elements
of $D$ are to be stored, we want a distributive lattice to be constructed from $D$
in a way which ``preserves $D$ and its partial ordering.''  This informally describes what
is known as a ``free construction'' in category theory.  Limiting ourselves
to such free constructions is usefully constraining since free constructions are unique
in any category\cite{category}.  We are thus lead to search for the free distributive 
lattice constructed from a given partially ordered set $P$.  This 
is given by the set of antichain\cite{antichain} subsets of $P$ with join and 
meet defined by
\begin{equation}
A \vee B\stackrel{d}{=}(A \cup B)^{\uparrow}
\end{equation}
\begin{equation}
A \wedge B\stackrel{d}{=}(A^{\downarrow} \cap B^{\downarrow})^{\uparrow}
\end{equation}
where $A^{\uparrow}$(an antichain) is the set of elements which are maximal in $A$\cite{uparrow} and
$A^{\downarrow}$ is the set of elements in $P$ which are less than 
or equal to some element of $A$.

\begin{prop} Antichain subsets of $P$ with joint and meet as defined is a distributive lattice.
\end{prop}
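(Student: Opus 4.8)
The plan is to reduce the claim to the standard fact that the down-sets (order ideals) of $P$ form a distributive lattice under ordinary union and intersection, by exhibiting an isomorphism between antichains and down-sets that carries the operators $\vee$ and $\wedge$ defined above onto $\cup$ and $\cap$. First I would set up the correspondence: to each antichain $A$ associate its down-closure $A^{\downarrow}$, which is a down-set, and to each down-set $I$ associate its set $I^{\uparrow}$ of maximal elements, which is an antichain. I would check these maps are mutually inverse. The identity $(A^{\downarrow})^{\uparrow}=A$ is straightforward: if $a\in A$ and $a\le x\in A^{\downarrow}$, then $x\le a'$ for some $a'\in A$, whence $a\le a'$ forces $a=a'$ (as $A$ is an antichain) and thus $x=a$, so every element of $A$ is maximal in $A^{\downarrow}$, while any maximal element of $A^{\downarrow}$ lies below, hence equals, some element of $A$.

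Next I would show the operators transport correctly. Using the two elementary facts $(X^{\uparrow})^{\downarrow}=X^{\downarrow}$ for any $X$ and $(A\cup B)^{\downarrow}=A^{\downarrow}\cup B^{\downarrow}$, applying down-closure to the join gives $(A\vee B)^{\downarrow}=((A\cup B)^{\uparrow})^{\downarrow}=(A\cup B)^{\downarrow}=A^{\downarrow}\cup B^{\downarrow}$. For the meet, $A^{\downarrow}\cap B^{\downarrow}$ is itself a down-set (an intersection of down-sets), so taking its maximal elements and closing back down returns it, giving $(A\wedge B)^{\downarrow}=((A^{\downarrow}\cap B^{\downarrow})^{\uparrow})^{\downarrow}=A^{\downarrow}\cap B^{\downarrow}$.

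To finish, I would observe that the down-sets of $P$ are closed under $\cup$ and $\cap$ and hence form a sublattice of the Boolean lattice $2^{P}$; being a sublattice of a distributive lattice, they inherit the distributive laws. Since the map $A\mapsto A^{\downarrow}$ is a bijection carrying $\vee$ and $\wedge$ to $\cup$ and $\cap$, it is a lattice isomorphism, and commutativity, associativity, idempotence, absorption and distributivity all pull back from the down-set lattice to the antichains at once, rather than being verified by hand.

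The main obstacle is the reverse half of the bijection, namely $(I^{\uparrow})^{\downarrow}=I$ for a down-set $I$. This requires that every element of $I$ lie below some maximal element of $I$, which can fail in a general poset (for instance the whole of $P$ is a down-set with no maximal elements when $P$ has an infinite ascending chain). I would therefore flag the needed hypothesis explicitly: either $P$ satisfies the ascending chain condition, or one restricts attention to finite antichains and finitely generated down-sets. Under such a condition $I\mapsto I^{\uparrow}$ is well defined and inverse to down-closure, and the argument goes through; without it the correspondence, and hence the lattice structure, is not guaranteed.
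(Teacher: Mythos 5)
Your proof takes a genuinely different route from the paper's. The paper argues by direct verification: it first lists the preliminary identities $(A^{\uparrow})^{\downarrow}=A^{\downarrow}$, $(A^{\downarrow})^{\uparrow}=A^{\uparrow}$, $(A\cup B)^{\uparrow}=(A^{\uparrow}\cup B^{\uparrow})^{\uparrow}$, $(A\cup B)^{\downarrow}=A^{\downarrow}\cup B^{\downarrow}$, $(A\cap B)^{\downarrow}=A^{\downarrow}\cap B^{\downarrow}$, and then checks commutativity, idempotence, associativity, absorption and distributivity one at a time, deriving explicit formulas such as $A\vee(B\vee C)=(A\cup B\cup C)^{\uparrow}$ and $A\wedge(B\wedge C)=(A^{\downarrow}\cap B^{\downarrow}\cap C^{\downarrow})^{\uparrow}$. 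You instead embed the antichains into the down-sets via $A\mapsto A^{\downarrow}$, show that $\vee$ and $\wedge$ transport to $\cup$ and $\cap$, and pull back all the equational laws at once from a sublattice of $2^{P}$. Your argument is structurally cleaner --- one injective, operation-preserving map replaces five separate axiom checks --- and it makes visible exactly where hypotheses on $P$ enter; the paper's computation is more self-contained and yields the explicit $n$-ary join and meet formulas, but it buries the same difficulty inside its list of ``elementary'' identities.

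The caveat you flag is not pedantry; it is a genuine gap in the proposition as stated, and it afflicts the paper's proof equally, since the identities $(A^{\uparrow})^{\downarrow}=A^{\downarrow}$ and $(A\cup B)^{\uparrow}=(A^{\uparrow}\cup B^{\uparrow})^{\uparrow}$, asserted there for arbitrary subsets, fail precisely when a set lacks enough maximal elements --- the very failure you identify for $(I^{\uparrow})^{\downarrow}=I$. (On the join side you are in fact safe with no hypothesis: a union of two antichains contains no chain of three elements, so each of its elements lies below a maximal one. The problem sits entirely on the meet side, where $I=A^{\downarrow}\cap B^{\downarrow}$ can have no maximal elements at all.) Concretely, take $P=\mathbb{Z}\cup\{a,b\}$ with the usual order on $\mathbb{Z}$, with $a$ and $b$ incomparable and every integer below both. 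Then $\{a\}\wedge\{b\}=\mathbb{Z}^{\uparrow}=\emptyset$, so $\{n\}\wedge(\{a\}\wedge\{b\})=\emptyset$, while $(\{n\}\wedge\{a\})\wedge\{b\}=\{n\}\wedge\{b\}=\{n\}$: associativity of meet fails outright, so for this $P$ the antichains are not a lattice at all. Some hypothesis such as your ascending chain condition (or restriction to finitely generated down-sets) is therefore genuinely needed; under it both your argument and the paper's go through, but the paper never states it. Your only internal blemish is that in your second paragraph you assert $(X^{\uparrow})^{\downarrow}=X^{\downarrow}$ ``for any $X$'' and that a down-set is recovered from its maximal elements, before conceding in your last paragraph that these are exactly what can fail; it would be cleaner to impose the hypothesis from the outset and cite it at those two steps.
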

\begin{proof}  First, let $A$, $B$ and $C$ be subsets of $P$ and note that $(A^{\uparrow})^{\downarrow}=A^{\downarrow}$, 
$(A^{\downarrow})^{\uparrow}=A^{\uparrow}$, $(A\cup B)^{\uparrow}=(A^{\uparrow}\cup B^{\uparrow})^{\uparrow}$,
$(A\cup B)^{\downarrow}=(A^{\downarrow}\cup B^{\downarrow})$ and $(A\cap B)^{\downarrow}=(A^{\downarrow}\cap B^{\downarrow})$.
Meet and join are obviously commutative and idempotent.  They are also associative, since, for antichains $A,B,C\subset P$,
$A\vee(B\vee C)=(A \cup B \cup C)^{\uparrow}$ and $A\wedge(B\wedge C)=(A^\downarrow \cap B^\downarrow \cap C^\downarrow)^\uparrow$.
Absorption laws follow since $A\vee(A\wedge B)=(A\cup(A^\downarrow \cap B^\downarrow)^\uparrow)^\uparrow=((A^\downarrow)^\uparrow \cup
(A^\downarrow\cap B^\downarrow)^\uparrow)^\uparrow=A^{\downarrow\uparrow}=A^\uparrow=A$ and, similarly, $A\wedge(A\vee B)=A$.
Lastly, $A\wedge(B\vee C)=(A^\downarrow\cap((B\cup C)^\uparrow)^\downarrow)^\uparrow=(A^\downarrow\cap(B\cup C)^\downarrow)^\uparrow=(A\wedge
B)\vee(A\wedge C)$ and we have a distributive lattice. \end{proof}

\noindent In addition to being a distributive lattice, the lattice of proposition 2.1 is also the 
free distributive lattice from the category of partially ordered sets and order preserving functions to the category of 
distributive lattices with $\vee$--preserving morphisms [$\phi(x\vee y)=\phi(x)\vee \phi(y)$].  
To see this, let $\mathcal{F}(P)$ be the lattice of the proposition, let $\alpha:P\rightarrow \mathcal{F}(P)$ map $p$ to $\{p\}$ and suppose that $f:P\rightarrow M$ is an order preserving map to some 
distributive lattice $M$.  Then  $F(A)\stackrel{d}{=}\vee_{a\in A}f(a)$ is the unique $\vee$-morphism such that $F\circ\alpha=f$.  As is true 
in all categories, $\mathcal{F}(P)$ is the unique distributive lattice having these properties.

The most straight-forward use of the free construction would be to let caches be antichain subsets of $D$ itself.  
Although this does produce a distributive lattice, the resulting caches cannot easily represent, for instance, a hierarchical 
file system.  The simplest plausible solution is then to define caches $C$ recursively as, roughly speaking, the free distributive 
lattice on the partially ordered set $D\times C$ with $(d,c)\leq(d',c')$ if and only if $d\leq d'$ and $c\leq c'$.
More carefully, let $C_o\stackrel{d}{=}\{\{\}\}$ and, for $n\geq 1$, let 
$C_n\stackrel{d}{=}\mathcal{F}(D\times C_{n-1})$ so that $C_n$ are caches with depth less than or equal to $n$. 
Noting that $C_o\subset C_1\subset\dots$, we can define caches in general as follows.
\begin{defin} Caches are the distributive lattice $\bigcup_{n=0}^{\infty}C_n$.\end{defin}
\noindent It is sometimes convenient to use algebraic notation for caches, writing, for example,
\begin{equation}
(a,X) \vee (b,Y) \vee (c,Z)
\end{equation}
for the antichain $\{(a,X), (b,Y), (c,Z)\}$ and writing ``$0$'' for the
minimum empty antichain. Generally speaking, the fewer the minimal elements in $D$, 
the more caches tend to collapse.  For example, if $D$
is the integers with their usual ordering (so that no integer is minimal), 
then $(1,0)$, $(2,0)$ and $(1,(2,0))\vee (2,0)$ are caches and
$(1,0)\wedge[((1,(2,0))\vee(2,0)]=[(1,0)\wedge(1,(2,0))]\vee[(1,0)\wedge(2,0)]=(1,0)\vee(1,0)=(1,0)$. 
On the other hand, if $D$ are strings with trivial ordering (so that all strings are minimal), then caches 
like $(hello,0)\vee(world,0)$ cannot be reduced.  Non-minimal elements of $D$ are 
useful for selection, for example if $D$ is the set of strings with $world\leq wor*$, then 
$[(hello,0)\vee(world,0)]\wedge (wor*,0)=(world,0)$. 
With these examples in mind, the data $D$ used for the infrastructure should have many minimal elements to provide
a wide variety of stable caches and a rich order structure to provide flexible selection
and database-like functionality.  As we shall see, it is advantageous for $D$ to allow lazy refinement and to allow
new kinds of data to be dynamically included in $D$ as they are encountered.  Of course, easy user 
specification of elements of $D$ is also an important criterion.

With simple user specification in mind, construct $D$ from a meet semilattice direct sum\cite{meetsemidirectsum}
\begin{equation}
\Delta = d_0\oplus d_1\oplus\dots\oplus d_n
\end{equation}
of meet semilattice {\it basic data types} $d_0,\dots,d_n$ with minimum elements $0_0,0_1,\dots,0_n$ 
respectively. 
A user can thus specify an element of $\Delta$ by providing zero or more type/value pairs, such as
{\tt text:hello}, {\tt size:5}, {\tt \{text:hello, size:5\}} or {\tt http:www.bu.edu},
where {\bf text}, {\bf size} and {\bf http} are meet semilattices.  Since any partial
order can be made into a meet semilattice by adding a minimum element if necessary, the basic 
data types only need to be sets with some conveniently chosen partial order.

     Before completing the construction of $D$ from $\Delta$, we need some preliminary results 
concerning ``extensions'' of meet semilattices which will provide lazy refinement of elements of $\Delta$.

\begin{defin} An extension of meet-semilattice S is a mapping $F(s)=s\wedge f(s)$
where $f:S\rightarrow S$ is an order preserving function.
\end{defin}

\noindent It is easy to see that extensions are order preserving, non-increasing and closed under composition.
Given a set $E=\{F_0,F_1,\dots,F_n\}$ of extensions of meet semilattice $\Delta$, $d\in \Delta$ is a {\it fixed point}
of $E$ if $F(d)=d$ for all $F$ in $E$.  Such fixed points are unique, for suppose that $F(d)$ and $G(d)$ are
fixed points of $d$.  Then $G(F(d))=F(d)=d\wedge f(d) = d\wedge f(d)\wedge g(d\wedge f(d))\leq d\wedge f(d) \wedge g(d) \wedge g(f(d))
\leq d\wedge g(d)$.  Therefore $F(d)\leq G(d)$.  Similarly $G(d)\leq F(d)$.

   With the above results, suppose that we have a set $E$ of $\Delta$ extensions and
a guarantee that every $d\in \Delta$ reaches a fixed point of $E$ after a 
finite number of extensions.  Since fixed points are unique, we can let elements
of $\Delta$ be equivalent if they have the same fixed point and let $D$ be the 
equivalence classes with
\begin{equation}
[a]\wedge[b]\stackrel{d}{=}[a\wedge b]
\end{equation}
so that $D$ is clearly a meet semilattice provided that $\wedge$ is a function.
This is guaranteed by proposition 2.2.

\begin{prop} $D$ is a meet semilattice.\end{prop}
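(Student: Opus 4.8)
The plan is to reduce the proposition to a single well-definedness check. Writing $d^\ast$ for the unique fixed point reached from $d$ by repeated application of the extensions in $E$ (existence by the finiteness assumption, uniqueness by the computation just above the proposition), equivalence is $[a]=[b]\iff a^\ast=b^\ast$, and the proposed meet is $[a]\wedge[b]=[a\wedge b]$. Once this operation is shown to be independent of the chosen representatives, $D$ inherits commutativity, associativity and idempotency directly from the meet of $\Delta$ (so $[a]\wedge[b]=[a\wedge b]=[b\wedge a]=[b]\wedge[a]$, and similarly for the others); a commutative, associative, idempotent binary operation is precisely a meet semilattice, with order $[a]\le[b]\iff[a]\wedge[b]=[a]$. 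So the entire content of the proposition is that $(a\wedge b)^\ast$ depends only on $a^\ast$ and $b^\ast$.

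First I would record the elementary properties of the map $d\mapsto d^\ast$: it is non-increasing ($d^\ast\le d$) and idempotent ($(d^\ast)^\ast=d^\ast$), the latter because $d^\ast$ is already a fixed point. The one genuine lemma is that $d\mapsto d^\ast$ is order preserving. To prove it, suppose $a\le b$ and choose a composition $G$ of extensions with $G(a)=a^\ast$. Since extensions are order preserving and non-increasing, so is $G$, giving $a^\ast=G(a)\le G(b)$. Now $G(b)$ sits on a chain of extensions issuing from $b$, so by uniqueness its fixed point is again $b^\ast$; choosing a further composition $H$ with $H(G(b))=b^\ast$ and using that $H$ fixes the fixed point $a^\ast$, I get $a^\ast=H(a^\ast)\le H(G(b))=b^\ast$.

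With monotonicity available, the key identity is
\[
(a\wedge b)^\ast=(a^\ast\wedge b^\ast)^\ast.
\]
The inequality $(a^\ast\wedge b^\ast)^\ast\le(a\wedge b)^\ast$ follows from $a^\ast\wedge b^\ast\le a\wedge b$ (as $a^\ast\le a$ and $b^\ast\le b$) by monotonicity. For the reverse, put $c=(a\wedge b)^\ast$; then $c\le a$, so $c=c^\ast\le a^\ast$ by idempotency and monotonicity, and likewise $c\le b^\ast$, whence $c\le a^\ast\wedge b^\ast$ and $c=c^\ast\le(a^\ast\wedge b^\ast)^\ast$. The identity then settles well-definedness: if $a^\ast=(a')^\ast$ and $b^\ast=(b')^\ast$ then $(a\wedge b)^\ast=(a^\ast\wedge b^\ast)^\ast=((a')^\ast\wedge(b')^\ast)^\ast=(a'\wedge b')^\ast$, so $[a\wedge b]=[a'\wedge b']$.

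I expect the monotonicity lemma to be the only real obstacle; the rest is short lattice bookkeeping. Monotonicity is exactly the place where uniqueness of fixed points must be invoked, to exclude the possibility that reducing $a$ and $b$ along different extension paths reverses their order. A tidy alternative is to fold the finite set $E=\{F_0,\dots,F_m\}$ into the single extension $\Phi=F_m\circ\cdots\circ F_0$; one verifies that $\Phi(d)=d$ holds exactly when $d$ is a fixed point of every $F_i$, so that iterating $\Phi$ reaches $d^\ast$ in finitely many steps, and monotonicity of $d\mapsto d^\ast$ then drops out of the monotonicity of the iterates $\Phi^N$ evaluated at a common exponent $N$ large enough for both $a$ and $b$.
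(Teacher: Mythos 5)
Your proof is correct, and while it rests on the same two foundations as the paper's (uniqueness of fixed points, plus the trick of chaining compositions so that several elements are normalized at once), it is factored differently. The paper works throughout with explicit compositions of extensions: its two steps are (i) if $F$ is a composition taking $a\wedge b$ to its fixed point, then $F(a\wedge b)=F(F(a)\wedge F(b))$, and (ii) any two (hence any finitely many) elements admit a common composition taking each to its fixed point; well-definedness then follows by evaluating such an $F$ on $a'\wedge b'$. You instead promote the fixed-point map $d\mapsto d^\ast$ to a first-class object and prove it is non-increasing, idempotent and monotone --- i.e.\ a kernel (interior-type) operator --- after which the identity $(a\wedge b)^\ast=(a^\ast\wedge b^\ast)^\ast$ and well-definedness are representative-free order-theoretic bookkeeping. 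Your monotonicity lemma is proved by exactly the paper's device (ii), and your key identity is the composition-free form of (i), so the mathematical content coincides; but your organization buys genuine clarity: all use of the finiteness assumption and of uniqueness is quarantined in one lemma, and the final computation never has to track which elements a chosen composition happens to normalize. That is a real advantage, since the paper's last line chooses $F$ to normalize only $a$, $b$ and $a\wedge b$ and then applies it as though it also normalized $a'$, $b'$ and $a'\wedge b'$ (harmless, but it requires enlarging $F$ by further composition, which your formulation makes unnecessary). The paper's version, in exchange, is shorter and stays closer to the raw definitions. Your closing alternative --- folding $E$ into the single extension $\Phi=F_m\circ\cdots\circ F_0$ and iterating --- is also sound, but the assertion that $\Phi^N(d)$ reaches $d^\ast$ in finitely many steps deserves its own short induction: against a witnessing sequence $G_k\circ\cdots\circ G_1(d)=d^\ast$ one shows $\Phi^{j}(d)\le G_{j}\circ\cdots\circ G_{1}(d)$, and combines this with $d^\ast=\Phi^{j}(d^\ast)\le\Phi^{j}(d)$ to squeeze $\Phi^{k}(d)=d^\ast$.
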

\begin{proof}First note that for $a,b\in \Delta$, if $F(a \wedge b)$ is the fixed point of $a\wedge b$,
then $F(a\wedge b)=F(F(a)\wedge F(b))$.  This follows because $F(a\wedge b)\leq F(a)\wedge F(b)$ implies
$F(a\wedge b)\leq F(F(a)\wedge F(b))$ and $F(a)\wedge F(b)\leq a\wedge b$ implies $F(F(a)\wedge F(b))\leq F(a\wedge b)$.
Next, note that for any two points $a$ and $b$ in $\Delta$, there is a composition of extensions which takes both
$a$ and $b$ to their corresponding fixed point. To see this, let $F$ take $a$ to its fixed point and let $G$ take
$F(b)$ to its fixed point.  Then $G\circ F$ takes both $a$ and $b$ to their fixed points.  Finally, let $a\sim a'$,
$b\sim b'$ and let $F$ take $a$, $b$ and $a\wedge b$ to their respective fixed points.  Then 
$F(a'\wedge b')=F(F(a')\wedge F(b'))=F(F(a)\wedge F(b))=F(a\wedge b)$ and the meet defined above 
makes $D$ a meet semilattice.
\end{proof}

Finally, demote $D$ from a meet semilattice to a partially ordered set by removing any element 
containing a minimum element of the basic data types $d_0,d_1,\dots,d_n$.  The resulting partially 
ordered set has a single maximum $\{\}$ and, as desired, has many minimal elements consisting of direct sums of 
covers of $0_0,0_1,\dots,0_n$, one from each of $d_0,d_1,\dots,d_n$.

    This choice of $D$ has the features needed for the infrastructure including easy user 
specification, many minima and rich ordering structure for selection purposes.  At the system
level, an element of $D$ can be represented as the fixed point of an element of $d\in \Delta$.  
Since fixed points are preserved by the extensions, extensions can lazily be applied to $d$ whenever needed and 
new extensions can be dynamically added to the system without requiring recomputation of existing data. 

\section{Cache algebra}

Given any set $d_0,d_1,\dots,d_n$ of ``basic data types'' in
the form of meet semilattices, we have defined a concrete partially ordered set $D$ 
and a corresponding uniquely determined distributive lattice of caches given by definition 2.1. 
The pair structure of caches suggests defining two additional operators $/,//:C\times D\rightarrow C$ 
which ``select subcaches'' of a cache given $d\in D$. Let
\begin{equation}(A_0\vee A_1\vee\dots)/d\stackrel{d}{=}[(A_0/d)\vee (A_1/d)\vee\dots],\end{equation}
\begin{equation}(a,X)/d\stackrel{d}{=} X \wedge (d,1),\end{equation}
\begin{equation}(A_0\vee A_1\vee\dots)//d\stackrel{d}{=}[(A_0//d)\vee (A_1//d)\vee\dots]\end{equation} and
\begin{equation}(a,X)//d\stackrel{d}{=} [(a,X)\wedge(d,1)] \vee (X//d).\end{equation}

\noindent Given a cache $(a,X)$, for example, $(a,X)/\{\}=X$ is the ``contents'' of $(a,X)$.
If $D$ are strings with $world\leq wor*$ as before and $C=(a,(hello,0)\vee(world,0))$, then $C/wor*=C//wor*=(world,0)$ and
$C//\{\}=(a,(hello,0)\vee(world,0))\vee(hello,0)\vee(world,0)$.

   The action of the operators $\vee$, $\wedge$, $/$ and $//$ is completely prescribed by 
the definitions given here and can be implemented once for all caches.  
The {\it cache algebra} is completed by adding one more binary operator $<:C\times C\rightarrow C$ 
(called ``put,'' not to be confused with ``less than'') which, on the other hand, has more flexibility.  
Put is required to distribute over $\vee$ from the right as in $(A \vee B \vee C)<Y=(A<Y)\vee (B<Y)\vee(C<Y)$ so 
that its action is determined by its action on singleton caches which is defined by
\begin{equation}(d,X)<Y\stackrel{d}{=}(d,{\bf put}(d,X,Y))\end{equation}
where {\bf put} is only required to be a cache valued function of its arguments.  It is convenient to organize
the action of {\bf put} by introducing a data type {\bf type} which specifies a ``singleton cache subtype'' in the
sense that singleton caches with different {\bf type} values have different actions under put.  
A storage cache, for instance, might have {\bf put}$({\rm type:storage},X,Y)=X\vee Y$ while a counting
cache might have {\bf put}$({\rm type:counter},X,Y)=({\rm count}:\#Y,0)$ counting the number of 
singleton caches in $Y$.    The egg infrastructure has hundreds of such singleton cache subtypes performing
many different tasks.  Since put is not generally associative, 
we take $a<b<c\stackrel{d}{=}(a < (b < c))$ grouping from the right in the absence of parenthesis.

    The natural language for using caches is just cache valued expressions made with  
operations $\vee$, $\wedge$, $/$, $//$ and $<$.  Here, for example, is a typical cache-valued expression
\begin{equation}
( X < ( ({\rm depth}:3,0) \vee Y/\{{\rm name}:*.py\} ) ) /\{\}
\end{equation}
where $X$ and $Y$ are caches and {\bf depth} and {\bf name} are data types.
In order to keep the user experience as simple as possible, we would like to simplify the language
syntactically and, if possible, hide the underlying lattice concepts in a more intuitive and familiar framework.
In view of this, introduce three reference caches:
``.'' (current working cache), ``$\sim$'' (home cache) 
and ``\verb @ '' (home caches of people known to you) and
introduce a set of caches with conventional short names like {\tt pwd}, {\tt cd} and {\tt ls}.  
Proceeding informally, replace

\begin{itemize}
\item{\tt X/\{\}} with {\tt X/};
\item{{\tt\{name:foo\}} with {\tt foo}};
\item{\tt(X < Y)/\{\}} with {\tt X Y} if {\tt X} is a named cache;
\item{{\tt (d,0)} with {\tt d}} and
\item{$\vee$ with {\tt ` '}}
\end{itemize}
to define a language called {\it egg shell} (the exact language definition can be found in Appendix A).
Using these replacements, and letting $X={\tt ls}$ and $Y={\tt .}$, equation 12
\begin{verbatim}
    ls depth:3 ./*.py
\end{verbatim}
has a syntactically reassuring resemblance to a UNIX shell command.  To see that the resemblance 
is not superficial, notice that the cache algebra also has built in concepts of ``execution'' and ``piping between shell
commands.''  Any cache $A$ may be ``executed'' by computing $A/$, in other words, by computing its contents.
For example, entering {\tt ls .} into the egg interpreter causes it to compute
\begin{verbatim}    (I < (~/shells/ls < .)/)/\end{verbatim} 
so that the contents of ($\sim${\tt /shells/ls < .)/} are put into 
the interpeter where they are explicitly written to terminal output.  When multiple shell commands are 
used, they effectively ``pipe their output into each other from right to left.''
For example, a typical egg interpreter command
\begin{verbatim}
    . < music < has ext:mp4 ex depth:3 web.google "free music"
\end{verbatim}
uses egg shell commands {\tt web.google} (which comes from the {\tt web} plug-in), {\tt ex} and {\tt has} (which come from the
{\tt yolk} plug-in) to cause the contents of a Google search for ``free music" ({\tt web.google}) to be expanded to depth 3 ({\tt ex depth:3}) 
and puts any mp4 files ({\tt has ext:mp4}) into a new cache  named ``music'' which is then stored in the current 
working cache ``.''.  For a more database-like example,
\begin{verbatim}
    count has tstart:class lines ~/egg/plugins/*//ext:py
\end{verbatim}
uses the {\tt lines} shell command to split any text data from the egg source code ($\sim${\tt /egg/plugins/*//ext:py})
into individual lines, selecting only those lines which begin with ``class'' ({\tt has tstart:class}) by 
extending {\tt text} data (which has trivial ordering) to {\tt tstart} data (which has prefix ordering).  As a result, 
this command counts the number of classes in the egg source code.

   Searching for the simplest distributed computing infrastructure, we find that the assumption that only one 
kind of object is involved combines with the need to do storage to imply that caches must 
at least form a distributive lattice.  Arguing that caches should be freely constructed
from the partially ordered set of all data, we come to an explicit free distributive 
lattice once the partially ordered set of all data is chosen.  Adding operators $/$ and $//$ 
for convenience and $<$ to allow caches to do more than storage, we have the ``essentially unique'' 
simplest candidate for a general purpose infrastructure.  Through experience with the
egg implementation, we find that caches as described here are able to conveniently represent
all the functional elements of what one wants in a distributed computing infrastructure including
files, directories, web sites, data bases, batch jobs, running processes, servers and many others.

From the perspective of a single user, the infrastructure resembles 
a database-integrated distributed operating system where UNIX-like shell commands pipe cache streams rather 
than byte streams and where the global set of all caches takes the place of the file system.
The normally difficult problem of cache coherence
in distributed file systems\cite{cachecoherence1,cachecoherence2,cachecoherence3} is relegated to
implementation details of the few cache subtypes which need it, such as banks.  Other than
caches with such special implementations, changes made to caches in one egg session are not reflected in other
egg sessions without an explicit {\tt save} command.  Each egg session is ``editing the internet'' so that
two users may modify the same cache, may see different results and only the last person to save is guaranteed
to see the same results in a new session.
Since caches are self-referential but not hierarchical, file system content smoothly combines with web content, 
bridging the gap between the Plan-9 like ``everything is a file'' view\cite{plan-9} and attempts to 
abstract server-based functionality such as web services\cite{web-services} and the semantic web\cite{semantic-web}.  
Compared with UNIX, more combinations of shell commands pipe into each other usefully and one seems to need fewer 
shell commands and fewer optional ``flags.''  Typically zero or one flags are needed per egg shell command.

\section{Servers, cache addressing and home caches}

    On a large scale, the infrastructure described here is a collection of independent processes representing
personal interactive sessions or servers.  Within a particular process, the other processes in the system
are represented by caches.  The most na\"{i}ve behavior of such caches would be to execute whatever cache is
put into them by the client. At the interprocess communication level, each server would
\begin{enumerate}
\item{Receive a serialized cache {\tt X}};
\item{Return the serialization of {\tt X/}}.
\end{enumerate}
Since cache execution is sufficient to do any operation, this would give the client complete control of the server.
Real servers use a simple variation of this where the returned results go through a {\tt server} egg shell command 
on the server side:
\begin{enumerate}
\item{Receive a serialized cache {\tt X}};
\item{Return the serialization of {\tt server X}};
\end{enumerate}
so that a server process appears to the client as a cache which yields {\tt server X} when {\tt X} is put into it. 
This formulation of server behavior leaves room for improvements to clients and servers over time while keeping 
the byte stream level protocol unchanged.
Although servers are defined here as executers, they are typically not directly used that way.
Instead, base class implementations of the
cache algebra use server cache execution to implement $\vee$, $\wedge$, $/$, $//$, and $<$ so that caches hosted by 
remove servers appear seamlessly.  Given a particular singleton cache $(d,X)$, its operations are implemented by a 
server if $d$ contains a server's personal public key, IP address and port number.  
Existing protocols like SSH and HTTP are included as artificial persons so as to fit in the same framework as the 
servers described here.

From the egg shell perspective, standard protocols and egg server caches appear uniformly and can be 
referred to using basic internet data such as
\begin{verbatim}
    cd http:www.bu.edu
\end{verbatim}
or
\begin{verbatim}
    ls {person:Alice,port:33366,host:physics.bu.edu,path:home}/
\end{verbatim}
where one refers to host names or IP addresses explicitly.  Alternatively, it is often more convenient 
to ignore low level addressing and to instead refer to caches relative to the ``home caches'' of persons, 
organizations or servers known to you.  A person, server or organization creating such a public/private key identity may also provide any 
egg shell expression defining their {\it home cache}. Home caches
may be simple web pages such as {\tt http:www.bu.edu} or may be more sophisticated such as
\begin{verbatim}
    first hasnt d:error test {http:www.bu.edu/~Alice}  {http:physics.bu.edu/~Alice}
\end{verbatim}
which provides the first of two web pages which are error-free.  A home cache may also be the output of an 
egg shell command so that the home cache can be modified whenever the corresponding plug-in software is updated.
The home caches of all persons and servers known to you appears in the ``$@$'' cache in egg shell, so one can
navigate by
\begin{verbatim}
    cd @/BostonUniversity
\end{verbatim}
or 
\begin{verbatim}
    cd @/Alice
\end{verbatim}
rather than by referring to explicit hosts or IP addresses.  In effect, home caches make a flexible common address space
not shared globally, but shared by all who have a common public key in their rolodex.

\section{Economics}

    A large scale infrastructure na\"{i}vely built along the lines described so far would have a number
of problems in common with the present-day internet.  For example, we have yet to propose a mechanism for 
sharing resources among multiple users in accordance with the wishes of the owners of the hardware.
In conventional terms, we also, so far, lack any mechanism for
authentication, access control, accounting or resource allocation and have made no mention of virtual 
organizations\cite{gsi}.  
Instead of treating these issues individually, we proceed in two steps.  First, introduce
a common hierarchical currency to explicitly represent the intentions of actors within the system.  
Next, treat problems like authentication and resource allocation as economic problems to be solved
by currency exchange.  Within this framework, improved solutions to these problems can appear over time
via improved cache designs without disrupting the infrastructure as a whole.  At the same time, 
currency exchange opens a new dimension for system improvement by designing caches to compete
with each other for currencies of interest.  
Over time, the design of such caches can be improved by using, for example, 
sophisticated auctions\cite{Parkes,Kang} and resource estimation techniques\cite{timeisright} 
so that the infrastructure  becomes increasingly internally efficient and increasingly responsive 
to the desires of people and organizations.

    Economics within the infrastructure is based on a hierarchical currency called the ``egg.''  Eggs appearing 
in the form of a data type called {\bf check} which can be minted by any user, transferred from user 
to user and which may be earned by caches. Each user has a bank cache which is used for storing 
checks and executing bank to bank transfers.  Possession of a valid check
\begin{equation}
     {\tt 100[Jan 1,2011]BostonUniversity.Alice.Bob}
\end{equation}
for example, cryptographically guarantees that the person associated with the
BostonUniversity private key minted (at least) 100 eggs, transferred these to Alice 
who then transferred 100 eggs to Bob.  Transfers can be gifts, as above, or payments denoted by an underline as in
\begin{equation}
     {\tt 100[Jan 1,2011]BostonUniversity.Alice.Bob.}\underline{{\tt server1}}
\end{equation}
where Bob has paid the entire 100 eggs to server1.  Depending on the intent of the owner of server1, the
server1 bank might be configured to accept currencies according to
\begin{center}
    \begin{tabular}{|l|l|l|}
         currency & who & preference  \\ \hline
	 BostonUniversity.*.Alice.? & anyone given BostonUniversity currency by Alice & 1000.0 \\
	 BostonUniversity.* & anyone possessing BostonUniversity currency & 50.0 \\
	 HarvardUniversity.* & anyone possessing HarvardUniversity currency & 40.0 \\
	 * & the general public & 1.0 \\ \hline
    \end  {tabular}
\end  {center}
which effectively controls access by listing acceptable currencies and implies that server1 will prefer to
earn BostonUniversity currency given by Alice over all others.  Note that server1 can safely grant access to 
anyone who Alice decides to give currency to without having to know who these people are in advance and 
without using a certificate authority.  Payments made to server1 are stored in the server1 bank cache which then 
transfers the check back to the payer until the check returns to its original creator and becomes a complete {\it receipt}.  

    To define the currency more explicitly, we presume a common cryptosystem with public keys, private keys
and digital signatures.  A {\it payload} is a string encoding a tuple consisting of a floating point 
currency denomination, a starting date, an expiration date, a tracking number and a payment bit.  A ``check''
is a signed tuple of strings ending with a public key.  The {\it owner of a check} is the owner of
this public key.  With these preliminaries, 
\begin{defin}
A check is either a payload/public key pair $(L,k)$ signed by anyone, or 
a payload/check/public key triplet $(L,c,k)$ signed by the owner of the check $c$.
\end{defin}
\noindent To illustrate the life cycle of a check, suppress the payload except for denomination and payment bit,
let $A$, $B$ and $C$ be be identities (public/private key pairs) and denote text $x$ signed by $A$ by $[x]_A$.
\begin{center}
\begin{tabular}{|l|l|l|}
         action & check & display \\ \hline
         A gives B 100 eggs & $[100,B]_A$ & $100A.B$ \\ 
         B pays C 50 eggs & $[50P,[100,B]_A,C]_B$ & $50A.B.\underline{C}$ \\ 
	 C returns check to B & $[50,[50P,[100,B]_A,C]_B,B]_C$ & $50A.B.\underline{C}.B$ \\
	 B returns check to A & $[50,[50,[50P,[100,B]_A,C]_B,B]_C,A]_B$ & $50A.B.\underline{C}.B.A$ \\ \hline
\end{tabular}
\end{center}
Examining the second step in more detail, $B$ begins with $[100,B]_A$, constructs $[50P,[100,B]_A,C]_B$ using 
the $B$ private key and sends the resulting string to $C$.  To preserve the initial 100 egg value, $B$ is required to both 
produce and save $[50,[100,B]_A,B]_B$, self-giving 50 eggs, and to destroy the original string $[100,B]_A$.  If the $B$ bank
follows this procedure, no new currency is generated or lost while making the payment.  Each new payload created by the $B$ bank keeps
the existing start and expiration dates and is given a unique tracking number so that the check can be identified after it returns
to the $A$ bank.  In the last table entry, the check returned to $A$ has completed its circuit and is thus a {\it receipt}.

Although the presumed integrity of digital signatures prevents checks from being forged, it is clear that some behavior that
we need is only enforced by the software of the bank caches.  In particular, handling of checks must satisfy banking rules:
\begin{enumerate}
\item{Other than minting a new check, all check operations must preserve total value;}
\item{Checks with a payment bit set may not be used for additional payments;}
\item{Cashed checks recieved by a bank must be transferred to the previous bank in the check sequence.}
\end{enumerate}
\noindent Determined users could, however, violate the banking rules by modifying their own bank cache
software or by copying checks outside of the system. Although such rule violations cannot be prevented, 
they can be detected after the fact by examining receipts and tracking numbers.  It is not clear to us 
whether such problems should be discouraged, prevented by putting all banking caches on trusted servers or 
detected and punished.

     Unlike previous attempts to combine economics and distributed
computing\cite{other1,other2,other3,other4,other5}, egg currency is compatible with a wide variety of
independent economic behaviors ranging from ignoring currency entirely to equally sharing among friends to traditional
fixed monthly allocations to commercial situations where earning currency results in real financial gain.
Unlike grid systems where virtual organizations are layered on top of distributed computing\cite{gsi},
virtual organizations here appear organically: users, servers and virtual organizations are the same thing.  
Take, for example, an organization like the ATLAS high energy physics experiment.  ATLAS needs a computing infrastructure
which includes the 2500 physicists and 169 universities and laboratories in the collaboration\cite{ATLAS}.
Within the infrastructure proposed here, ATLAS would be a single user like everyone else with a single private key, rolodex and bank.  
If, for example, ATLAS decides to spend 90\% of it's worldwide computing resources on Higgs searches and the remaining 10\% on 
Supersymmetry for the next month, ATLAS would mint two checks {\tt 90[Mar-1-2009]ATLAS.Higgs} and {\tt 10[Mar-1-2009]ATLAS.Supersymmetry} sent to the Higgs user and the 
Supersymmetry user respectively and continuing recursively until one reaches the people who directly spend the currency.  
Worldwide resources respond to the newly minted checks because ATLAS collaborators and computing centers want to earn 
the ATLAS currency because they want to contribute to the project.  The ATLAS person can have this effect without
having to know exactly who is in the Higgs group or who is using what resources when.  At the largest scale, the currency makes a
global certificate authority unnecessary (note that a user who gave any requested amount to any acceptable 
user is the same thing as a certificate authority).  The egg currency, then, is intended to be used at all scales, from 
global applications to organizational applications to server authentication to setting access controls on a single file.   
Although our current egg implementation includes egg currency as defined above, only the most basic economic interactions
have been used so far: user directed gifts and payments, two-way authentication by currency exchange and
price-based access control.  In spite of the fact that much theoretical progress has been made as to how caches can compete with
each other in auctions\cite{Kang}, basic questions remain as to how economically activated caches should 
be designed, how they should be collectively used and what properties such collections might have.  Basic
questions also remain in the macroeconomics of the egg currency: how should one optimally deal with bad behavior 
such as banking rule violations, inflation and caches which accept payment but do not provide expected service?
It is important to design banks so that easy to understand monetary policies can be set and then executed
over time with minimal direct user intervention. Since the economic system has an important social component, 
it is hard to make do without large scale testing involving a reasonable number of users.  
This has not yet been done.

\section{Implementation}

    Starting with an attempt to make an infrastructure from only one kind of object, we 
have arrived at a concrete system consisting of caches, a cache algebra, a language and a protocol
which is essentially uniquely determined in the sense described in section 2.  Adding the egg 
currency provides a framework for authentication, access control, resource allocation and 
a proposal that system improvements over time come from designing caches to compete with each other
to earn eggs.  
In order to understand whether this framework really provides everything that one expects of a general purpose distributed
computing infrastructure, we have made an implementation of the infrastructure called {\it egg} written in Python\cite{python}.  
Caches, data types, $D$ and singleton caches are implemented as Python classes which are organized
into ``plug-ins.''  Egg includes plug-ins for the core system: {\bf yolk} and {\bf egg}, 
for common protocols: {\bf ssh}\cite{ssh}, {\bf web}\cite{web}, {\bf gsi}\cite{gsi}, {\bf local}\cite{local}, 
for batch schedulers: {\bf pbs}\cite{pbs}, {\bf lsf}\cite{lsf}, for commmon unix software {\bf linux}\cite{linux}, 
{\bf rpm}\cite{rpm}, {\bf apache}\cite{apache}, 
for mysql based servers {\bf mysql}\cite{mysql}, {\bf lfc}\cite{lfc}, and for the ATLAS high energy physics 
experiment\cite{ATLAS}, {\bf pacman}\cite{pacman}, {\bf panda}\cite{panda}, {\bf dq2}\cite{dq2}, {\bf atlas}\cite{ATLAS}.
When imported, each plug-in contributes new data types, extensions, singleton cache subtypes 
and egg shell commands to the system.  Most of the plug-ins are small (typically a few hundred lines of Python) and
are easy to write without detailed knowledge of the core system.  Figure 1, for example, shows the complete 
Python code required to define an egg shell command {\bf lines} which splits text into caches containing one
line of text per cache.  Including all 24 plug-ins, there are 194 egg shell commands, 439 cache subtypes, 
845 data types and 330 extensions in the system. 
\begin{figure}[htbp]
\begin{center}
  \leavevmode
  \includegraphics[width=0.85\textwidth,height=0.25\textheight]{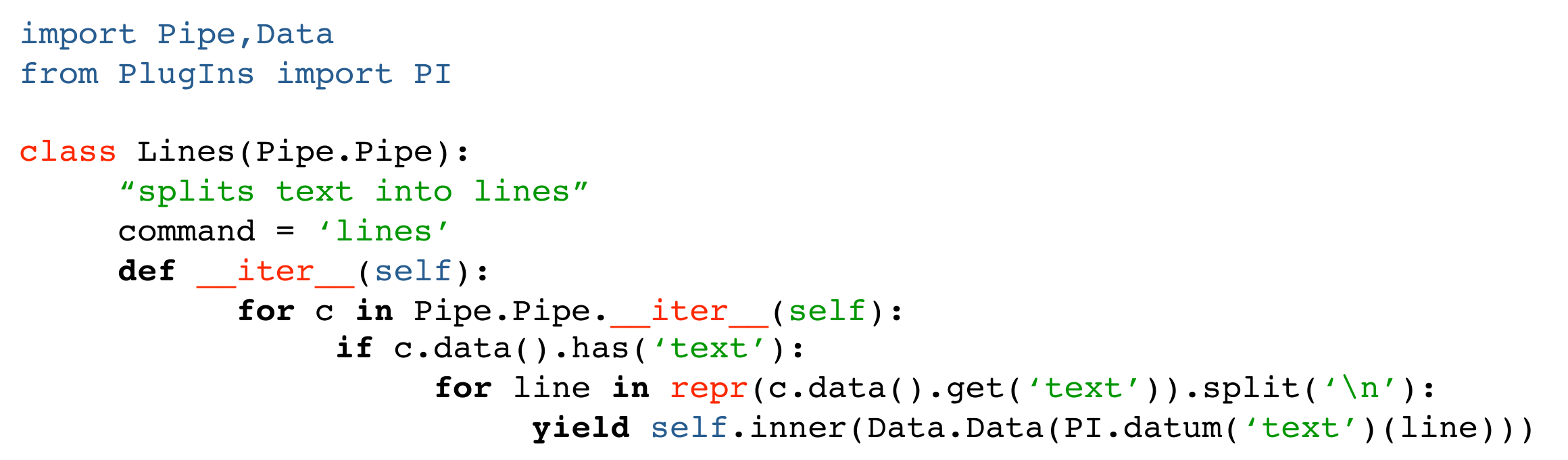}
  \caption{{\it Python source code for a functioning egg shell command called {\bf lines}. {\bf Pipe} is
  a singleton cache subclass where {\bf put}$(d,X,Y)$ is a function of $Y$ only.  {\bf Data} is the Python version
  of the universal partially ordered set of data $D$. {\bf PI} is the plug-in manager allowing the creation of
  new objects of selected basic data types.  The {\bf command=`lines'} data member is picked by the plug-in 
  manager causing a shell command named {\bf lines} to automatically appear in the interpreter.   
  For each cache {\bf c} that extends to {\bf text} data, the inner loop lazily produces one cache for
  each line of text using the {\bf self.inner} function which creates caches from data.  Most singleton cache subtypes such as {\bf Lines} need only provide a subclassed Python iterator.}}
\end{center}
\end{figure}
    Extensions as defined in section 2 play important roles in the system which we did not
entirely anticipate.  These not only effectively save space and time via lazy evaluation, but they also simplify 
the introduction of new cache types and new egg shell commands.  In using the infrastructure, we find that concious 
considerations of extensions or lazy evaluation isn't necessary and extensions work smoothly below the level of normal user 
awareness. The {\tt stat} command is an example of unexpected benefits: {\tt stat} is able to collect statistics for 
any data type which is an abelian group as well as a meet semilattice.  It can, for example, compute 
statistics for file sizes by summing the {\bf size} data which are put into it.  
Plain text data, on the other hand is not an abelian group and is ignored by {\tt stat}. 
The following, for instance,
\begin{verbatim}
     stat d:size .//
\end{verbatim}
sums the file sizes of all files in {\tt .//}, but, in addition, it sums all data types which extend from size, producing
useful additional statistics about size (such as a histogram of sizes) without having to know that these statistics 
exist ahead of time.  A typical use of this feature would be something like
\begin{verbatim}
     stat d:pbs pbs.jobs ssh:atlas.bu.edu
\end{verbatim}
which produces a summary of all PBS-related\cite{pbs} data produced from PBS jobs running on {\tt atlas.bu.edu}.
We find that a substantial part of the system can be written just by writing extensions. 

    One of the most useful features of the system comes from the simple observation that a text file containing
egg shell code can either be executed as a script or can be interpreted as defining the contents of the text file itself
 so that one can, for example, ``cd into a script.''  Files which are interpreted as contents rather than as a script 
 to be executed are called ``hatches.''  A typical hatch, for example, would be a file called {\tt hosts.hatch} containing:
\begin{verbatim}
     #
     #  NET2 worker nodes
     # 
     show d:linux.host,linux.load,linux.uptime
     linux.hosts @/NET2/Boston/nodes/
\end{verbatim}
When a cache is created for the {\tt hosts.hatch} file, it contains the output of the indicated {\tt linux.hosts} egg shell
command which produces one cache representing each of the Boston University worker nodes in the U.S. ATLAS Northeast Tier2 Center\cite{net2}.
The line {\tt show d:name,linux.host,linux.load,linux.uptime} defines default display options.  The net effect of this is that if you 
cd into {\tt hosts}, you see the linux hosts each with load and uptime information without having to remember that the
{\bf linux} plug-in contains the {\tt linux.hosts} command or having to remember how to use it.  Hatches make it easy to
blend caches defined by standard protocols like HTTP and SSH with egg server caches with the output of egg shell commands and
with other caches.  This strongly re-enforces the impression of a uniform infrastructure spanning all content and functionality.

    The success of a uniform cache oriented view of all infrastructure content means that one quickly loses track of which
protocols are used by which caches and which caches are hatches and which are not.  This, of course, is the desired result, but 
it also means that any protocol-level errors which occur will be incomprehensible to most users.  In general, errors are treated 
in egg by producing error data rather than by raising exceptions in the interpreter.  By default, high-level cache oriented
error data are visible in the interpreter.  This allows users to notice data of the {\bf error} type, collect them and analyze 
them using the same tools which are used for analyzing any other kind of data.  Creation of {\bf error} data is typically 
accompanied by created {\bf errorbase} (low level error) and {\bf traceback} (Python traceback) data which can be also 
treated and analyzed like any other kind of data.  We also found it convenient to introduce a logging cache, which is 
particularly useful for collecting statistics about running egg sessions.
\begin{figure}[htbp]
\begin{center}
  \leavevmode
  \includegraphics[width=0.85\textwidth,height=0.4\textheight]{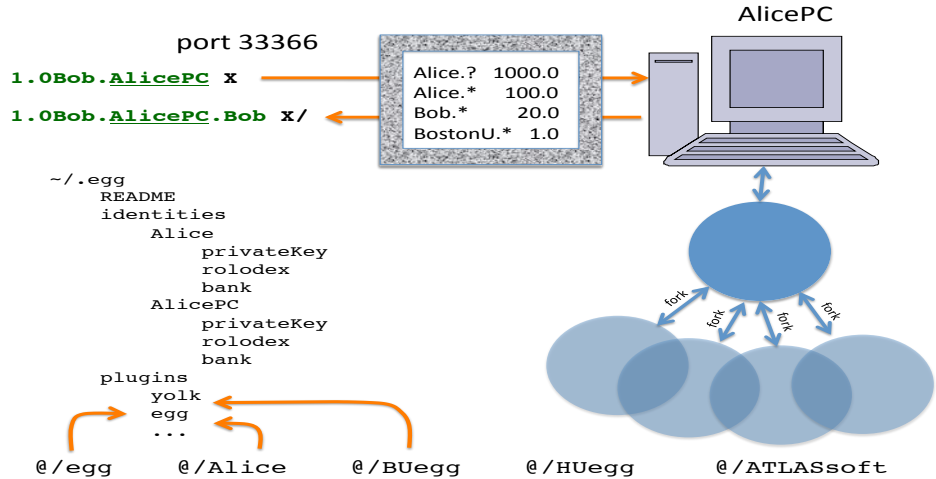}
  \caption{{\it A schematic of a simple egg server as installed on Alice's PC.}}
\end{center}
\end{figure}

    In addition to interprocess communication with servers, egg shell commands in this implementation are able to 
execute in parallel by forking and returning results from multiple child processes, allowing effective use of 
multi-core machines.  An ``execution network'' keeps track of hosts which are reachable by protocols allowing Unix shell execution.  
By composing Unix shell commands, egg can effectively reach further than by SSH from a single account and can 
transparently use the entire network for extensions and can transparently use software like MySQL\cite{mysql} 
and Globus\cite{gsi} even if the software is not available where the egg interpreter is actually running.    

    The egg infrastructure includes simple servers and a basic economic system which is capable of
basic currency and banking operations, authentication and access control. Figure 2 shows a schematic 
of a simple egg server installed on Alice's PC.  AlicePC accepts incoming messages either from someone
who Alice has directly given currency to (providing {\bf Alice.?}), or from someone holding any Alice, Bob
or Boston University currency (providing {\bf Alice.*}, {\bf Bob.*} or {\bf BostonU.*}).  The server
values these currencies with respect to each other in the ratios 1000.0/100.0/20.0/1.0 respectively as 
indicated in the figure with the idea that more advanced server designs would act to prefer earning higher valued
currencies.  As indicated, Bob communicates with AlicePC by sending a serialized
cache $X$ containing a payment {\tt 1.0Bob.}\underline{\tt AlicePC} and a serialized cache to be executed on port 33366.  
AlicePC returns a receipt {\tt 1.0Bob.}\underline{\tt AlicePC}{\tt .Bob} to Bob (thus completing a two way authentication) 
and the serialization of $X/$, the requested execution.  The server operates by forking a process to handle 
each new connection.  Alice's egg software installation is indicated on the left side of the figure.  
The figure indicates plug-ins being dynamically imported from the home caches of trusted 
persons {\bf egg}, {\bf Alice}, {\bf BUegg}, {\bf HUegg} and {\bf ATLASsoft}.

    A number of applications of the system related to the U.S. ATLAS Northeast Tier 2 computing 
facilities at Boston University and Harvard University\cite{net2}.  A typical simple ATLAS-specific operation is 
\begin{verbatim}
     atlas.unsatisfied @/NET2/Harvard/nodes
\end{verbatim}
which uses the {\tt atlas.unsatisfied} command from the {\bf atlas} plug-in to test whether all of the 
Harvard worker nodes have the RPMs and libraries necessary for ATLAS Monte Carlo simulation jobs to run successfully.
A typical more advanced command
\begin{verbatim}
     stat d:unix tr din:text dout:path hasnt tstart:srm split sep:atlas.bu.edu \
           tr din:lfc.pfname dout:text @/NET2/DDM/lfc//since:yesterday
\end{verbatim}
uses five egg shell commands {\tt stat}, {\tt tr}, {\tt hasnt}, {\tt split}, {\tt tr} to test if the path 
entries which have arrived since yesterday in an ATLAS-specific MySQL database and server called ``LFC'' really
exist in the local file system.  Reading from right to left, entries in the LFC database includes dates which 
extend to the data type {\bf since}, which is a time interval partially ordered by inclusion.  As a result,
{\tt @/NET2/DDM/lfc//since:yesterday} contains all database entries which have arrived since {\tt yesterday} 
(defines as the time interval from 24 hours ago to now in univeral time coordinates).  The {\tt tr} shell 
command extracts any {\tt lfc.pfname} data containing the URLs of files which are supposed to 
exist in our storage element.  The shell commands {\tt split} and {\tt hasnt} are then used to manipulate 
the text value of the {\tt lfc.pfname} so as to extract the UNIX path as text data being emitted by 
the {\tt hasnt} command.  The second {\tt tr} command then translates the text path into {\bf path} data 
so that it is interpreted aa the path of a file in the local file system.  Finally, {\tt stat} summarizes
all of the UNIX level information about these files such as whether they exist, their size, creation/modification/access
times, ownership and access control bits.
    
    The egg system is developed enough to give a clear impression of the proposed infrastructure in 
a wide variety of applications.  We find the situation encouraging in that all the needed elements of the infrastructure
easily fit in the framework where everything is a cache and all added functionality is provided by 
cache put operations.  Although the economic system has only been tested in the limited way explained in section 5, 
we have not found any needed feature which does not easily fit within the framework described here.

\section{Appendix: The egg shell language}
Egg shell is a very simple language with no variables, subprograms, flow control, data structures, classes, assigments, functions or exceptions.  
It has only four binary operations: concatenation, /, // and $<$.  Power and flexibility comes from an available set of
egg shell commands.  It is convenient to directly define the language as a recursive 
partial function $\pi$ mapping strings to caches.  Strings which are not in the domain of $\pi$ are, 
by definition, not legal egg shell programs.
The compiler $\pi$ is defined in terms of another partial function $\delta$ mapping strings to data (elements of $D$).
Both $\pi$ and $\delta$ are defined by sequences of partial functions where the action of the sequence on a 
string $s$ is defined to be action of the first item in the sequence for which $s$ is in the domain of the partial function.
First, define $\delta$ by the following:

\begin{center}
    \begin{tabular}{|l|l|l|}
    	 name & pattern & result \\ \hline
	 left blank & $'\ 'x$ & $\delta(x)$ \\ \hline
	 right blank & $x'\ '$ & $\delta(x)$ \\ \hline
	 comma & $x','y$ & $\delta(x)\wedge\delta(y)$ \\ \hline
         curly & $\{x\}$ & $\delta(x)$ \\ \hline
	 curly2 & $x\{y\}$ & $\{name:x\}\wedge\delta(y)$ \\ \hline
	 datum & $x:y$ & ${x:y}$ \\ \hline
	 maximum & $''$ & $\{\}$ \\ \hline
    \end  {tabular}
\end  {center}

\noindent Given $\delta$, we can define the compiler $\pi$.  Compilation is always done within
the context of a particular set of available egg shell commands.  Let {\bf shell} map egg shell command names to their 
corresponding caches.  Let {\bf DOT}, {\bf TILDE} and {\bf AT} be the caches corresponding to the three distinguished named 
caches ``., $\sim$ and \verb @ '' respectively and define $\pi$ by 

\begin{center}
    \begin{tabular}{|l|l|l|}
    	 name & pattern & result \\ \hline
	 lines & $x\ {\bf newline}\ y$, leftmost & $\pi(x)\vee\pi(y)$ \\ \hline
	 left blank & $'\ 'x$ & $\pi(x)$ \\ \hline
	 right blank & $x'\ '$ & $\pi(x)$ \\ \hline
	 shell command & $C\ x$ for command C & $({\bf shell}(C)<\pi(x))/\{\}$ \\ \hline
	 put & $x\ <\ y$, rightmost & $\pi(x)<\pi(y)$ \\ \hline
	 lub & $x\ y$, leftmost & $\pi(x)\vee\pi(y)$ \\ \hline
	 parenthesis & $(x)$ & $\pi(x)$ \\ \hline
	 slashes & $x/d$ or $x//d$, rightmost & $\pi(x)/\delta(d)$ or $\pi(x)//\delta(d)$ respectively \\ \hline
	 current working cache & $'.'$ & {\bf DOT} \\ \hline
	 context & $'\sim'$ & {\bf TILDE} \\ \hline
	 friends & $'{\bf @}'$ & {\bf AT} \\ \hline
	 singleton & $x$ & $(\delta(x),0)$ \\ \hline
    \end  {tabular}
\end  {center}
 
\noindent In text files, backslash line continuation and ``\#'' characters indicating comments are handled in the usual way.


\begin{thebibliography}{10}
\bibitem{Egg} J. Brunelle, P. Hurst, J. Huth, L. Kang, C. Ng, D.C. Parkes, M. Seltzer, J. Shank and S. Youssef,
{\it Egg: An Extensible and Economics-Inspired Open Grid Computing Platform}, in:
Proc. 3rd Int. Workshop on Grid Economics and Business Models (GECON'06), Singapore, 2006.
\bibitem{lattice} A {\it lattice} is a set with two associative, commutative, idempotent binary operations $\vee$ (``join'') and 
$\wedge$ (``meet'') satisfying $x\vee(x\wedge y)=x\wedge(x\vee y)=x$ for all $x$ and $y$ in the set.  A {\it distributive lattice}
also satisfies $x\wedge(y\vee z)=(x\wedge y)\vee (x\wedge z)$ for all $x,y,z$.  See {\it Notes on Lattice Theory} by J.B. Nation at {\tt http://www.math.hawaii.edu/}$\sim${\tt jb/books.html}.
\bibitem{category} R. Geroch, {\it Mathematical Physics}, University of Chicago Press, 1985;
M. Barr and C. Wells, {\it Category Theory for Computing Science}, Prentice Hall, 1995; 
S. MacLane, {\it Categories for the Working Mathematician}, Springer, 1998.
\bibitem{antichain} A subset $A$ of a partially ordered set is a {\it chain} if it is maximally ordered: if $x\leq y$ or $y\leq x$ holds for
each $x,y\in A$.  Subset $A$ is an {\it antichain} if it is minimally ordered: if $x\leq y\Leftrightarrow x=y$.  A function $f:X\rightarrow Y$
is {\it order preserving} or {\it monotonic} if $x\leq x'$ implies $f(x)\leq f(x')$.
\bibitem{uparrow} In other words, $A^{\uparrow}$$\stackrel{d}{=}$$\{a\in A$ such that $a\leq a'\Rightarrow a=a'$ for all $a'\in A\}$.
\bibitem{meetsemidirectsum} If $A$ and $B$ are meet semilattices with minimum elements, the meet semilattice
direct sum $A\oplus B$ is the set of subsets of the disjoint union of $A$ and $B$ reduced by applying 
meet operations wherever defined.  Thus, an element of $A\oplus B$ contain at most one element of $A$ and at most one element 
of $B$.  
\bibitem{cachecoherence1} M.N. Nelson, B. Welch and J.K. Ousterhout, 
{\it Caching in the Sprite Network File System}, 
ACM Transactions on Computer Systems, volume 6, 1988.
\bibitem{cachecoherence2} T. Anderson, M. Dahlin, J. Neefe,
D. Roselli and R. Wang, Serverless network file systems, in
{\it Proceedings of the 15'th Symposium on Operating Systems
Principles}, ACM, December 1995.
\bibitem{cachecoherence3} C. Gray and D. Cheriton, 
{\it Leases: an efficient fault--tolerant mechanism for 
distributed file cache consistency}, in: SOSP '89: 
Proceedings of the twelfth ACM symposium on operating system
principles, ACM, New York, 1989.
\bibitem{plan-9} R. Pike, D. Presotto, S. Dorward, B. Flandrena, K. Thompson, H. Trickey
and P. Winterbottom, {\it Plan 9 from Bell Labs}, Computing Systems, Volume 8, number 3,
cite-seer.istpsu.edu/pike95plan.html, 1995.
\bibitem{web-services} D. Booth, H. Haas, F. McCabe, E. Newcomer, M .Champion,
C. Ferris and D. Orchard, {\it Web Services Architecture}, W3C Working Group Note 11, {\tt http://www.w3.org/TR/ws-arch/}, February, 2004.
\bibitem{semantic-web} {\it W3C Semantic Web Activity}, {\tt http://www.w3.org/2001/sw/}, 2001.
\bibitem{Parkes} R.K. Dash, N.R.Jennings and D.C.Parkes, {\it Computational-Mechanism Design: A Call to Arms},
IEEE Intelligent Systems, 18:40-47, 2003.
\bibitem{Kang} L. Kang and D.C. Parkes, {\it A Decentralized Auction Framework to Promote Efficient Resource
Allocation in Open Computational Grids}, in: Proc. ACMEC Joint Workshop on The Economics of 
Networked Systems and Incentive-Based Computing, San Diego, 2007.
\bibitem{timeisright} B.N. Chun, P. Buonadonna, A. AuYoung, C. Ng, D.C. Parkes, J. Shneidman, A.C. Snoeren and
A. Vahdat, {\it Mirage: A Microeconomic Resource Allocation System for SensorNet Testbeds}, in:
Proc. of 2nd IEEE Workshop on Embedded Networked Sensors (EmNetsII), 2005.
\bibitem{other1} C.A. Waldspurger, T. Hogg, B. Huberman, J.O. Kephart and 
W.S. Stornetta, {\it Spawn: A Distributed Computational Economy},
IEEE Trans. on Software Engineering 18:103-117, 1992.
\bibitem{other2} R. Woldki, J. Brevik, J. Plank and T. Bryan, {\it Grid Resource 
Allocations and Control Using Computational Economies}, in: 
Grid Computing: Making the Global Infrastructure a
Reality, F. Berman, G. Fox and T. Hey (Eds.) pp.747-772, Wiley and Sons, 2003.
\bibitem{other3} R. Buyya, D. Abramson and J.  Giddy,
{\it NimrodG: An Architecture of a Resource Management and 
Scheduling System in a Global Computational Grid}, in:
Proc. of the 4th Int. Conf. on High Performance
computing in Asia-Pacific Region, pp.283-289, Beijing, China, 2000.
\bibitem{other4} K.Lai, B. Huberman and L. Fine, {\it Tycoon: A Distributed
Market-based Resource Allocation System}, Technical Report, Hewlett Packard, No. cs.DC/0404013, 2005.
\bibitem{other5} I.E. Sutherland, {\it A futures market in computer time},
Commun. ACM, 11:449-451, 1968.
\bibitem{ATLAS} {\it The ATLAS Experiment}, {\tt http://atlas.cern.ch}.
\bibitem{gsi} I. Foster and C. Kesselman, {\it The Grid 2: Blueprint for 
a New Computing Infrastructure}, Morgan Kaufmann, November, 2003.
\bibitem{python} The Python language, {\tt http://www.python.org}.
\bibitem{ssh} D.J. Barrett, R.E. Silverman and R.G. Byrnes, {\it SSH:The Secure Shell}, O'Reilly 2005.
\bibitem{web} The world wide web, {\tt http://www.wc3.org/}.
\bibitem{local} {\bf local} is the plug-in handling the protocol of local files and directories.
\bibitem{pbs} Portable Batch System (PBS), {\tt http://www.pbsgridwords.com/}.
\bibitem{lsf} Load Sharing Facility, Platform Computing Corporation, {\tt http://www.platform.com/Products/platform-lsf/}.
\bibitem{linux} The Linux operating system, {\tt http://www.linux.org/}.
\bibitem{rpm} RedHat package manager, {\tt http://rpm.org/}.
\bibitem{apache} The Apache Software Foundation, {\tt http://www.apache.org/}.
\bibitem{mysql} {\it The MySQL database}, {\tt http://www.mysql.com}.
\bibitem{lfc} Local file catalogue, {\tt http://www.gridpp.ac.uk/wiki/LCG\_File\_Catalog}.
\bibitem{pacman} Pacman package manager, {\tt http://physics.bu.edu/pacman}.
\bibitem{panda} T. Maeno, {\it PanDA: distributed production and distributed analysis system for ATLAS}, 
J.Phys, Conf.Ser.{\bf 119} 062036, 2008.
\bibitem{dq2} M. Branco, D. Cameron, B. Gaidioz, V. Garonne, B. Koblitz, M. Lassnig, R. Rocha, P. Salgado and T. Wenaus, 
{\it Managing ATLAS data on a petabyte-scale with DQ2}, J.Phys., Conf.Ser.{\bf 119} 062017, 2008.
\bibitem{net2} The U.S. ATLAS Northeast Tier 2 center, {\tt http://atlas000.bu.edu/NETier2/wiki/index.php/}.

\end{thebibliography}
\end{document}